\begin{document}

\newtheorem{restate}{Theorem}
\newtheorem{theorem}{Theorem}
\newtheorem{proposition}{Proposition}
\newtheorem{corollary}{Corollary}
\newtheorem{lemma}{Lemma}
\newtheorem{claim}{Claim}
\newtheorem{definition}{Definition}

\title{An Improved Combinatorial Algorithm for Boolean Matrix Multiplication}
\author{Huacheng Yu\thanks{Supported in part by NSF CCF-1212372.} \\ \vspace{-10pt} \\ Stanford University}

\maketitle

\begin{abstract}
We present a new combinatorial algorithm for triangle finding and Boolean matrix multiplication that runs in $\hat{O}(n^3/\log^4 n)$ time, where the $\hat{O}$ notation suppresses poly(loglog) factors. 
This improves the previous best combinatorial algorithm by Chan~\cite{Chan15} that runs in $\hat{O}(n^3/\log^3 n)$ time. Our algorithm generalizes the divide-and-conquer strategy of Chan's algorithm.

Moreover, we propose a general framework for detecting triangles in graphs and computing Boolean matrix multiplication. Roughly speaking, if we can find the ``easy parts'' of a given instance efficiently, we can solve the whole problem faster than $n^3$. 
\end{abstract}

\section{Introduction}
Boolean matrix multiplication (BMM) is one of the most fundamental problems in computer science.
It has many applications to triangle finding, transitive closure, context-free grammar parsing, etc~\cite{Furman70},~\cite{Munro71},~\cite{FM71},~\cite{Valiant75}.
One way to multiply two Boolean matrices is to treat them as integer matrices, and apply a fast matrix multiplication algorithm over the integers. 
Matrix multiplication over fields can be computed in ``truly subcubic time'', i.e., computing the product of two $n\times n$ matrices can be done in $O(n^{3-\epsilon})$ additions and multiplication over the field. 
For example, the latest generation of such algorithms  run in $O(n^{2.373})$ operations~\cite{VassilevskaW12},~\cite{LeG14}. 
These algorithms are ``algebraic'', as they rely on the structure of the field, and in general the ring structure of matrices over the field. 

There is a different group of BMM algorithms, often called ``combinatorial'' algorithms.
They usually reduce the redundancy in computation by exploiting some combinatorial structure in the Boolean matrices. 
The ``Four Russians'' algorithm by Arlazarov, Dinic, Kronrod, and Faradzhev~\cite{ADKF70} is the most well-known combinatorial algorithm for BMM. On the RAM model with word size $w=\Theta(\log n)$, ``Four Russians'' algorithm can be implemented in $O(n^3/\log^2 n)$ time. About 40 years later, this result was improved by Bansal and Williams. 
In their FOCS'09~\cite{BW09} paper, they presented an $O(n^3(\log\log n)^2/\log^{9/4} n)$ time combinatorial algorithm for Boolean matrix multiplication, using the weak regularity lemma for graphs. Recently, Chan presented an $O(n^3\left(\log\log n\right)^3/\log^3 n)$ time algorithm in his SODA'15 paper~\cite{Chan15}, improving the running time even further. 

Although these combinatorial algorithms have worse running times than the algebraic ones, they generally have some nice properties. 
Combinatorial algorithms usually can be generalized in ways that the algebraic ones cannot be. For example, Chan's algorithm partly extends an idea of divide-and-conquer in an algorithm for the offline dominance range reporting problem by Impagliazzo, Lovett, Paturi, and
Schneider~\cite{ILPS14}; the algebraic structure of dominance reporting is completely different from BMM's. Moreover, in practice, these combinatorial algorithms are usually fast and easy to implement, while in contrast, most theoretically fast matrix multiplication algorithms are impractical to implement. Finding a matrix multiplication algorithm that is both ``good'' in theory and practice is still an important open goal of the area. 

In this paper, we generalize the ideas of Impagliazzo et al.~\cite{ILPS14} and Chan~\cite{Chan15}, to present a faster combinatorial algorithm for triangle detection: \emph{given an $n$-node graph, does it contain a triangle?}

\begin{theorem}\label{TriDectAlgo}
Given a tripartite graph $G$ on $n$ vertices, we can detect if there is a triangle in $G$ using a combinatorial algorithm in $\hat{O}\left(n^3/\log^4 n\right)$ time on a word RAM with word size $w\geq \Omega(\log n)$. \footnote{We use $\hat{O}(f(n))$ to suppress $\textrm{poly}(\log\log f(n))$ factors in the running time.}
\end{theorem}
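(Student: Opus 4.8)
The plan is to recast triangle detection in the tripartite graph $G=(A,B,C)$ as a set-intersection (Boolean-matrix-multiplication) problem: writing $N(v)$ for the neighbourhood of a vertex, a triangle exists iff some edge $(a,c)$ of the $A$--$C$ subgraph has $(N(a)\cap B)\cap(N(c)\cap B)\neq\emptyset$. So it suffices to compute, for every $a$ and $c$, whether the sets $N(a)\cap B$ and $N(c)\cap B$ intersect — in fact any superset of this Boolean product is enough — and then AND the resulting rows against the $A$--$C$ adjacency rows, which costs only $O(n^2/w)$ with bit-packed words. As in the Four Russians algorithm, two of the four $\log n$ factors will come for free from word-level parallelism (an OR/AND of two $n$-bit vectors costs $O(n/w)=O(n/\log n)$); the remaining two must come from a divide-and-conquer on the middle part $B$.

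First I would set up the recursion on $B$, following Chan~\cite{Chan15} and Impagliazzo--Lovett--Paturi--Schneider~\cite{ILPS14}: split $B$ into halves $B_1,B_2$; since $N(a)\cap N(c)$ meets $B$ iff it meets $B_1$ or $B_2$, it suffices to solve the sub-instances $(A,B_1,C)$ and $(A,B_2,C)$ and OR the answers. The gain comes from peeling off the ``easy parts'' at each recursion node, whose sub-universe $B'$ has some size $m$: a vertex $a$ with $|N(a)\cap B'|$ large relative to $m$ automatically shares a neighbour in $B'$ with every other large vertex, so all such pairs are cleared by a single bit-packed scan; a vertex whose neighbourhood in $B'$ is very sparse survives in only a few of the sub-universes further down and can be pruned out of most branches; and the residual ``medium'' interactions, together with the base case $|B'|=\mathrm{poly}(\log n)$, are dispatched by Four-Russians-style lookup tables indexed by the $\le 2^{|B'|}$ possible neighbourhood patterns. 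A charging argument keeps the scans affordable: a given vertex is ``large'' at only $O(2^{\ell})$ nodes on level $\ell$, because its total degree into $B$ is at most $n$ and the level-$\ell$ sub-universes partition $B$.

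The new ingredient — and the source of the extra $\log n$ factor over Chan's $\hat{O}(n^3/\log^3 n)$ — is to run this ``extract the easy part'' step at several scales at once rather than with a single large/small threshold: I would use a geometric sequence of thresholds $m, m/2, m/4,\dots$ and correspondingly nest the divide-and-conquer, so that at every scale the newly-exposed large-versus-large pairs are resolved while the genuinely hard residual instance handed to the next scale is only a sub-$n^3$ fraction of the work. This is precisely the ``general framework'' advertised in the abstract: the running time is governed by how cheaply the easy parts can be extracted at each scale. The parameters — number of scales, branching factor of the recursion, base-case size, and lookup-table dimensions — then have to be balanced against one another; this forces polylogarithmic base-case universes and introduces $\mathrm{poly}(\log\log n)$ overhead from the table bookkeeping, which is why the bound is $\hat{O}(n^3/\log^4 n)$ rather than a clean $O(n^3/\log^4 n)$.

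I expect the main obstacle to be the amortized analysis over the whole recursion tree: one must show that, summed over all nodes and all scales, the total of (scan cost for large-versus-large pairs) plus (recursion overhead for the sparse parts) plus (table construction and lookup for the medium and base cases) is $\hat{O}(n^3/\log^4 n)$. This is delicate because the three contributions trade off against each other, and a loose bound on any single one destroys the last logarithmic factor. A secondary point is keeping everything genuinely combinatorial on a word RAM with $w\ge\Omega(\log n)$: the pattern-indexed tables must fit in memory and be built within budget, and every set operation must be realised by word-parallel AND/OR rather than by anything algebraic.
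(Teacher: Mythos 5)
Your proposal diverges from the paper's actual argument at the decomposition step, and it leaves the crucial amortized analysis --- which is the real content of the theorem --- unresolved. The paper does not split $B$ into halves and never builds a multi-scale hierarchy of thresholds. Instead, it keeps a single fixed threshold on the \emph{product} $d(v,B)\cdot d(v,C)$: if every $v\in A$ has $d(v,B)d(v,C)\le |B||C|/\Delta^2$, a Four-Russians lookup table indexed by $\le\Delta$-subsets of small ``blocks'' of $B$ and of $C$ (Lemma~\ref{SparseAlgo}) already yields the full $\Delta^4\approx\log^4 n$ savings; otherwise it picks a high-degree vertex $v_1$ with neighbourhoods $B_1,C_1$, scans the non-edge rectangle $B_1\times C_1$ directly, and recurses on two subinstances that cover $B\times C\setminus(B_1\times C_1)$ exactly once, with $A$ replaced by $A\setminus\{v_1\}$ and with the asymmetric branching $(B,C\setminus C_1)$ and $(B\setminus B_1,C_1)$ (up to a swap). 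The charging is to \emph{pairs} in $B\times C$, not to vertices, and the fact that each pair is charged at most once is what makes the large-graph cost $\hat O(n^3/\log^4 n)$.

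The gap in your sketch is that the recursion blow-up in the small-graph regime is never controlled, and your proposed control mechanism does not work. Your claim that a vertex is ``large'' in only $O(2^{\ell})$ sub-universes at level $\ell$ is vacuous: there are only $2^{\ell}$ sub-universes at that level, so a vertex could be large in all of them; the degree-sum bound gives no savings unless the largeness threshold is a vanishing fraction of the sub-universe size, in which case the ``easy'' scan is no longer cheap. In the paper this is replaced by a genuinely different and quite specific recurrence analysis: letting $T(S)$ be the number of triples brute-forced starting from $|B||C|=S$, the algorithm gives $T(S)\le T((1-t')S)+T(t'(1-t)S)$ with $t>1/\Delta$, $t'>1/\Delta^2$. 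The paper bounds $T$ by looking at the recursion tree, separating leaves by the number of ``right-child'' (small-branch) moves on the root-to-leaf path, and comparing against a modified tree $\mathcal R'$ where $t$ is set to $0$: paths with many right moves shrink by a factor $(1-1/\Delta)^{10\Delta\log\Delta}\ge\Delta^{10}$ relative to $\mathcal R'$, whose leaf values sum to $S$, while paths with few right moves number at most $n^{0.4}$. This is the mechanism that actually buys the fourth $\log$ factor, and nothing in your proposal supplies an equivalent. If you wanted to pursue your binary-split-plus-scales idea you would still need a quantitative replacement for that analysis; as written, your proposal names the obstacle (``the amortized analysis over the whole recursion tree'') but does not overcome it.
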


Vassilevska Williams and Williams~\cite{VW10} proved that triangle detection and Boolean matrix multiplication are ``subcubic equivalent'' in the following sense: if there is a $O(n^3/g(n))$ time algorithm for triangle detection on $n$-node graphs, then we can use it to solve BMM on $n\times n$ matrices in $O(n^3/g(n^{1/3}))$ time. Together with Theorem~\ref{TriDectAlgo}, this gives a fast combinatorial algorithm for Boolean matrix multiplication. 

\begin{theorem}
There is a combinatorial algorithm to multiply two $n\times n$ Boolean matrices in $\hat{O}\left(n^3/\log^4 n\right)$ time. 
\end{theorem}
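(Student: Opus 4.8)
The plan is to derive this as an immediate corollary of Theorem~\ref{TriDectAlgo} together with the subcubic equivalence of Vassilevska Williams and Williams~\cite{VW10} quoted above. First I would dispose of the cosmetic gap that Theorem~\ref{TriDectAlgo} is stated for \emph{tripartite} graphs: given an arbitrary $n$-node graph $G$, form the tripartite graph on three disjoint copies $V_1,V_2,V_3$ of $V(G)$ that places, for every edge $uv$ of $G$, an edge between the copy of $u$ and the copy of $v$ in each of the three pairs of parts. Since $G$ is simple, every triangle of this tripartite graph consists of three copies of three distinct vertices forming a triangle of $G$, and conversely; so triangle detection in general $n$-node graphs also runs in $\hat{O}(n^3/\log^4 n)$, the blow-up being only a constant factor in the number of vertices.

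Next I would invoke the reduction of~\cite{VW10}. Writing the running time of Theorem~\ref{TriDectAlgo} as $O\!\left(n^3/g(n)\right)$ with $g(n)=\log^4 n\,/\,\mathrm{polyloglog}(n)$, that reduction yields a combinatorial algorithm for multiplying two $n\times n$ Boolean matrices in time $O\!\left(n^3/g(n^{1/3})\right)$; it is combinatorial because the reduction itself is purely combinatorial (block decomposition and graph gadgets, no algebra), so composing it with a combinatorial triangle algorithm keeps the result combinatorial. It then remains only to check that $g(n^{1/3})=\Theta(g(n))$. This holds because $\log(n^{1/3})=\tfrac13\log n$, hence $\log^4(n^{1/3})=\tfrac1{81}\log^4 n$, and $\log\log(n^{1/3})=\log\log n-\log 3=\Theta(\log\log n)$, so the poly(loglog) factor also changes by at most a constant. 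Therefore $n^3/g(n^{1/3})=O\!\left(81\,n^3/g(n)\right)=\hat{O}\!\left(n^3/\log^4 n\right)$, which is exactly the claimed bound.

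I do not expect a genuine obstacle here: all the real work lives in Theorem~\ref{TriDectAlgo}, and the only point requiring care is that the polynomial-in-$\log$ speedup is robust under the substitution $n\mapsto n^{1/3}$ — which it is, precisely because both $\log n$ and $\log\log n$ are merely rescaled by constants under that substitution. I would also remark that the reduction of~\cite{VW10} in fact outputs the full product matrix rather than a single yes/no triangle answer; since that is exactly the content of the cited subcubic equivalence, I would invoke it as a black box rather than reprove it.
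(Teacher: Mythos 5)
Your proposal is correct and takes essentially the same route as the paper: the paper likewise derives the BMM bound by combining Theorem~\ref{TriDectAlgo} with the Vassilevska Williams--Williams subcubic reduction (which it states just before the theorem in the convenient form that an $O(n^3/\log^c n)$ triangle algorithm gives BMM in the same time), while also noting in the preliminaries that the tripartite and general cases of triangle detection are time-equivalent up to a constant factor. Your version merely makes explicit the two steps the paper treats as immediate — the tripartite gadget and the robustness of $\log^4 n / \mathrm{polyloglog}\,n$ under $n \mapsto n^{1/3}$ — and both checks are carried out correctly.
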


Moreover, we generalize the algorithm, and propose a general framework for solving triangle detection combinatorially.

\begin{definition}
The \emph{large subgraph triangle detection problem} with parameters $\alpha$, $\beta$, and $\gamma$ is: given a tripartite graph G = $(A\cup B\cup C, E)$, output a pair $(G',b)$, where 
\begin{itemize} 
\item $G'$ is a subgraph with at least an $\alpha$-fraction of vertices from $A$, $\beta$-fraction of vertices from $B$, and $\gamma$-fraction of vertices from $C$, and 
\item $b = 1$ if $G'$ is triangle-free, and $b = 0$ if $G'$ contains a triangle. 
\end{itemize}
\end{definition}

That is, the large subgraph triangle detection problem is to identify a large subgraph $G'$ of $G$ for which we can conclude whether $G'$ is triangle-free or not. This problem is interesting when we can solve it quickly -- faster than what is known for standard triangle detection. Additionally, we can show that fast algorithms for large subgraph triangle detection imply fast algorithms for triangle detection in general:

\begin{theorem}
Let $n$ be an integer, $0 < \alpha, \beta, \gamma, c\leq 1$, and $G$ be any tripartite graph on vertex sets $A, B, C$ with at least $\sqrt{n}$ vertices in each part. If there is an algorithm $\mathbf{L}$ for large subgraph triangle detection on every such $G$ that runs in $O(c\alpha\beta\gamma |A||B||C|)$ time, then we can solve triangle detection on $n$-node graphs in $O(cn^3+n^{3-\epsilon/2})$ time for any $\epsilon>0$ such that $\alpha\beta>10\epsilon(1+\log\frac{1}{\gamma})$ and $\alpha>10\epsilon(1+\log\frac{1}{\beta})$. 
\end{theorem}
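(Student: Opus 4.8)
The plan is to reduce to the balanced tripartite case and then apply $\mathbf{L}$ recursively, at each step deleting the large triangle-free core it produces and recursing on the complementary parts. General triangle detection on $m$ vertices reduces, with only a constant-factor loss, to the tripartite case with parts $A,B,C$ of equal size (split the vertices into three groups of size $\lceil m/3\rceil$ and try the $O(1)$ ways of assigning roles), so it suffices to handle $G=(A\cup B\cup C,E)$ with $|A|=|B|=|C|=n$. Define $\textsf{Solve}(A,B,C)$: if $\min(|A|,|B|,|C|)\le n^{1-\delta}$ for a parameter $\delta<1/2$ fixed later, brute-force in $O(|A||B||C|)\le O(n^{3-\delta})$ time; otherwise all parts exceed $\sqrt n$, so we may run $\mathbf{L}$ on $(A,B,C)$, obtaining $(G',b)$ with $G'$ supported on subsets $A'\subseteq A$, $B'\subseteq B$, $C'\subseteq C$ of sizes $\ge\alpha|A|$, $\ge\beta|B|$, $\ge\gamma|C|$. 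If $b=0$, report a triangle. If $b=1$, then $G'$ is triangle-free --- i.e.\ $G$ has no triangle with vertices in $A',B',C'$ respectively --- so every triangle of $G$ has its $A$-vertex outside $A'$, or (with $A$-vertex in $A'$) its $B$-vertex outside $B'$, or (with $A,B$-vertices in $A',B'$) its $C$-vertex outside $C'$; accordingly recurse on $(A\setminus A',\,B,\,C)$, $(A',\,B\setminus B',\,C)$, and $(A',\,B',\,C\setminus C')$, reporting a triangle iff one of them does.

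Correctness is immediate from this case split, and every branch reaches the base case since the $t$-th recursive call strictly shrinks part $t$ ($t\in\{A,B,C\}$) by a factor $\le 1-\min(\alpha,\beta,\gamma)<1$. For the running time, separate the $\mathbf{L}$-calls from the brute-force leaves. The key identity for the former: if a recursion-tree node has volume $V=|A|\,|B|\,|C|$ and $\mathbf{L}$ realizes fractions $\alpha',\beta',\gamma'$ there, the volumes of its three children sum to exactly $(1-\alpha'\beta'\gamma')V\le(1-\alpha\beta\gamma)V$. Hence the total volume of the nodes at depth $\ell$ is at most $(1-\alpha\beta\gamma)^\ell n^3$; since the $\mathbf{L}$-call at a node of volume $V$ costs $O(c\,\alpha\beta\gamma\,V)$, the total over all $\mathbf{L}$-calls is $O\!\big(c\,\alpha\beta\gamma\sum_{\ell\ge0}(1-\alpha\beta\gamma)^\ell n^3\big)=O(cn^3)$. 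This telescoping needs no bound on the depth, and the remaining low-order overhead of walking the tree is easily absorbed.

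The main obstacle is the brute-force cost, which is at most (number of leaves)$\,\times\,n^{3-\delta}$; I therefore need the number of leaves to be $n^{O(\epsilon)}$, after which choosing $\delta$ a small constant multiple of $\epsilon$ (still $<1/2$, which the hypotheses on $\alpha,\beta,\gamma,\epsilon$ permit) gives $O(n^{3-\epsilon/2})$. To count leaves, label a node by the string over $\{A\text{-shrink},B\text{-shrink},C\text{-shrink}\}$ recording the edge types on the root-to-node path, with multiplicities $i,j,k$. Tracking sizes, an internal node satisfies $(1-\gamma)^k\,n>n^{1-\delta}$; once one also accounts for how $C$-shrink edges shrink the $B$-part by the realized $\beta'\ge\beta$ factor and how $B$- and $C$-shrink edges shrink the $A$-part, one gets the sharper constraints $j\log\tfrac1{1-\beta}+k\log\tfrac1\beta<\delta\log n$ and $i\log\tfrac1{1-\alpha}+(j{+}k)\log\tfrac1\alpha<\delta\log n$ in the worst case where $\mathbf{L}$ always realizes the minimum fractions. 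The number of leaves is then $O\!\big(\sum\binom{i+j+k}{i,j,k}\big)$ over this feasible region, and the heart of the proof is to bound this multinomial sum by $n^{O(\epsilon)}$: estimating its dominant term by the corresponding entropy and optimizing against the three size constraints is exactly where the hypotheses $\alpha>10\epsilon(1+\log\tfrac1\beta)$ and $\alpha\beta>10\epsilon(1+\log\tfrac1\gamma)$ enter --- they are precisely the inequalities forcing that optimum to be $O(\epsilon\log n)$. I expect pinning down the binding constraints among $i,j,k$ and checking that these two conditions suffice to be the delicate step; the reduction, correctness, and the $cn^3$ telescoping are routine.
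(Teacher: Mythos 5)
There is a genuine gap, and it is concentrated exactly where you predicted the ``delicate step'' would be. Your three recursive calls are $(A\setminus A',B,C)$, $(A',B\setminus B',C)$, $(A',B',C\setminus C')$, whose volumes are $(1-\alpha)S$, $\alpha(1-\beta)S$, $\alpha\beta(1-\gamma)S$. The theorem's hypotheses, however, are \emph{asymmetric} in $\alpha,\beta,\gamma$ ($\alpha$ enters linearly, $\gamma$ only inside a logarithm), and they are tailored to the mirrored decomposition $(A,B,C\setminus C')$, $(A,B\setminus B',C')$, $(A\setminus A',B',C')$, with volumes $(1-\gamma)S$, $\gamma(1-\beta)S$, $\beta\gamma(1-\alpha)S$: the widest branch must peel off the $\gamma$-fraction set $C'$, and only the narrowest (already weighted by $\beta\gamma$) may peel off the $\alpha$-fraction set $A'$. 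With your orientation, the brute-force-volume recurrence is $T(S)\leq T((1-\alpha)S)+T(\alpha(1-\beta)S)+T(\alpha\beta(1-\gamma)S)$, and bounding it by $S^{1-\Theta(\epsilon)}$ requires $(1-\alpha)^{1-\epsilon}+(\alpha(1-\beta))^{1-\epsilon}+(\alpha\beta(1-\gamma))^{1-\epsilon}<1$. Since the bases sum to exactly $1-\alpha\beta\gamma$ and the exponent $1-\epsilon$ inflates the first term alone by about $\epsilon\alpha$, this forces $\epsilon=O(\beta\gamma)$ up to logarithmic factors; in the regime the theorem is built for (cf.\ the paper's remark: $\alpha=1$, $\beta=1/\Delta$, $\gamma=1/\Delta^2$, $\epsilon=\Theta(1/(\Delta\log\Delta))$) the hypotheses hold but your inequality fails, the leaf count is not $n^{O(\epsilon)}$, and no $n^{3-\epsilon/2}$ bound follows. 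So this is not a cosmetic relabeling: your multinomial/entropy optimization cannot come out to $O(\epsilon\log n)$ under the stated conditions, and that optimization is in any case only sketched, not carried out.

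The parts you do complete are fine and match the paper in spirit: the reduction to the tripartite case, correctness of the three-way cover of the triples outside $A'\times B'\times C'$, and the telescoping bound $O(c\alpha\beta\gamma\sum_\ell(1-\alpha\beta\gamma)^\ell n^3)=O(cn^3)$ for the $\mathbf{L}$-calls (the paper phrases this as charging $O(c)$ to each triple of $A'\times B'\times C'$, charged at most once). To close the gap, flip the recursion to the orientation above and replace the path-counting by the paper's short induction: with base case $|A||B||C|<n^{2.5}$ (which also guarantees each part exceeds $\sqrt n$), prove $T(S)\leq n^{2.5}(S/n^{2.5})^{1-\epsilon}$ by induction, which reduces to $(1-\gamma)^{1-\epsilon}+(\gamma(1-\beta))^{1-\epsilon}+(\beta\gamma(1-\alpha))^{1-\epsilon}<1$; that inequality is precisely where $\alpha\beta>10\epsilon(1+\log\frac{1}{\gamma})$ and $\alpha>10\epsilon(1+\log\frac{1}{\beta})$ are used, and it yields $T(n^3)\leq n^{3-\epsilon/2}$ directly, with no need for a separate $\delta$ or an explicit leaf count.
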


That is, to derive an efficient algorithm for triangle finding, it is sufficient to find and solve an ``easy part'' of the input. This opens a new direction for attacking this problem. 

\vspace{10pt}

{\noindent\bf Related work.} \quad In the work by Bansal and Williams~\cite{BW09}, they used the weak regularity lemma of Frieze and Kannan~\cite{FK99} to discover and exploit small substructures in the graph. Generally speaking, a regularity lemma partitions the vertex set of a graph into disjoint sets, so that the edge distribution between any two sets is ``close to random.'' Bansal and Williams enumerate every triple of sets in the partition: if the subgraph induced by the triple is sparse, finding a triangle in this triple is easy. Otherwise, since the induced subgraph is dense and ``close to random'', the regularity lemma guarantees that it is impossible to check many pairs of vertices \emph{without} finding an edge between them. Integrating the method of Four Russians with the above approach yields an $\hat{O}(n^3/\log^{9/4}n)$ time algorithm for triangle detection. 

Chan~\cite{Chan15} used a very different approach for triangle detection which we now outline briefly\footnote{To keep consistency, the following description will be in the language of triangle finding, although his paper originally presented the algorithm in the language of Boolean matrix multiplication.}. Consider a tripartite graph on vertex sets $(A,B,C)$ such that $|A|\leq \textrm{polylog}(|B|+|C|)$ (if this is not the case, partition the set $A$ into $\textrm{polylog}(|B|+|C|)$-size subsets, and solve them independently). If the edge set between $A$ and $B$ is sparse, triangle detection is easy. Otherwise, there is some node $v\in A$ with many neighbors in $B$. Then the algorithm manually checks every pair of neighbors of $v$ and does two recursive calls. One is on $A\setminus\{v\}$, $B$ and the non-neighbors of $v$ in $C$, and the other is on $A\setminus\{v\}$, non-neighbors of $v$ in $B$ and neighbors in $C$. On one hand, this recursive procedure never puts any pairs of neighbors of $v$ in the branch. This guarantees that the algorithm only manually checks every pair in $B\times C$ at most once. On the other hand, the procedure may copy the set of non-neighbors in $B$ when doing a recursive call, which increases the total input size. However, since we have a lower bound on the degree of vertex $v$ to $B$, the procedure does not copy too many vertices each time. A careful analysis shows that the overhead of the recursion is actually rather tiny.

In Section~\ref{secTriFind}, we show how to extend the idea of divide-and-conquer in Chan's algorithm to get an even faster algorithm for triangle detection (and hence for Boolean matrix multiplication). We give a more intuitive proof of the recursion involving less calculation. In Section~\ref{secGeneral}, we propose a general framework for solving triangle detection, as a starting point for future work in this area.

\section{Triangle Detection}\label{secTriFind}

{\noindent\bf Preliminaries and notations.}\quad We shall use the fact that the triangle detection problem in general undirected graphs is time-equivalent to the problem restricted to tripartite graphs, up to a constant factor. (The proof is straightforward.) Henceforth, we will assume the input graph is tripartite, and the tri-partition of its vertices is given to us. 

For a graph $G=(V,E)$, a vertex $v\in V$ and a subset of vertices $S\subseteq V$, we denote $d(v, S)=\left|E\cap(\{v\}\times S)\right|$, which we call the \emph{degree of $v$ to $S$}. 

\vspace{10pt}

{\noindent\bf Main results.}\quad In what follows, we present an $\hat{O}(n^3/\log^4 n)$ time combinatorial algorithm for triangle detection. 

Suppose we are given a tripartite graph $G$ on vertex sets $A, B, C$. One (naive) approach to detect if there is a triangle in the graph is: for vertex $v\in A$, and all pairs $(u,w)$ of $v$'s neighbors, check if there is an edge between $u$ and $w$. 
The amount of work we do for vertex $v$ is proportional to the number of edges between $v$ and $B$ (the degree of $v$ to $B$) times the number of edges between $v$ and $C$ (the degree of $v$ to $C$). 
This approach is efficient whenever the product of these two degrees is low on average. However, if this is not the case, there must be a vertex $v\in A$ with a large product of degrees. If the enumeration reports no edge between any pair of $v$'s neighbors, we know there has to be a large ``non-edge area'' between $B$ and $C$, i.e. between $v$'s neighbors. 
In the rest of the algorithm, there is no need to look again at any pair of vertices in that area. We implement this idea by recursion: find disjoint subsets of $B\times C$ which together cover all pairs outside the non-edge area, and recurse on them. We show this recursion is actually efficient. 

Before stating and proving the efficiency of our main algorithm, we first show that the naive approach proposed above is ``Four-Russianizable'' as expected. That is, we can apply the Method of Four Russians to speed up the sparse case by a factor of roughly $\log^2 n$. 

The following algorithm is a generalization of an algorithm of Bansal and Williams~\cite{BW09}. 
\begin{lemma}\label{SparseAlgo}
Let $G=(V,E)$ be tripartite on vertex sets $A,B,C$ of sizes $k,m,n$ respectively. 
If $d(v_i,B)d(v_i,C)\leq \frac{nm}{\Delta^2}$ holds for some $\Delta$ and all $v_i\in A$ simultaneously, then we can detect if there is a triangle in $G$ combinatorially in  $O(mn\Delta^{6\Delta}+\frac{kmn}{\Delta^4}+k(m+n))$ time, on word RAM with word size $w\geq \Omega(\Delta\log\Delta+\log kmn)$. 
\end{lemma}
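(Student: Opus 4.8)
\medskip
\noindent\textbf{Proof plan.}\quad
The idea is to turn triangle detection on $G$ into a batch of independent ``does this bipartite piece contain an edge'' tests and then speed those tests up with the Method of Four Russians, in the spirit of the sparse-case routine of Bansal and Williams~\cite{BW09}. Writing $N_{B}(v)$ and $N_{C}(v)$ for the neighbours of $v$ in $B$ and in $C$, observe that $G$ has a triangle iff some $v\in A$ has an edge of $G$ running between $N_{B}(v)$ and $N_{C}(v)$. Testing this directly for $v$ costs $O(d(v,B)d(v,C))$ adjacency look-ups, so the hypothesis already yields an $O\!\big(kmn/\Delta^{2}+k(m+n)\big)$ algorithm; the whole point is to extract a further $\Delta^{2}$ factor by precomputation, paying the additive $O(mn\Delta^{6\Delta})$ for it.

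The plan is to chop $B$ and $C$ into blocks and, for each pair $(B_{p},C_{q})$ of blocks, tabulate a Boolean function $Q^{pq}(\sigma,\tau)$ that reports whether $G$ has an edge between $\sigma\subseteq B_{p}$ and $\tau\subseteq C_{q}$; each table is filled by running over subset pairs in subset-lattice order so that every new entry costs $O(1)$ word operations. Two opposing pressures govern the choice of block length. For the query phase to be cheap we want \emph{long} blocks: if $B$ is cut into $m/g$ pieces and $C$ into $n/g$ pieces, then for a fixed $v$ the number of \emph{relevant} block pairs --- those met by both $N_{B}(v)$ and $N_{C}(v)$ --- is at most $(m/g)(n/g)=mn/g^{2}$, so $g\approx\Delta^{2}$ would make the query phase run in $O(kmn/\Delta^{4})$ overall. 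But long blocks make the naive tables, of size $2^{2g}$ per block pair, far too large for the $\Delta^{6\Delta}$ budget.

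The reconciling observation I would pursue is that we never actually need entries for large $\sigma$ or $\tau$. After randomly partitioning into blocks of length $\Theta(\Delta^{2})$ --- and, for the vertices whose degree to one side is large, working from the \emph{smaller} side, where a random partition genuinely spreads the neighbourhood --- a Chernoff bound together with a union bound over the $O(k(m/g+n/g))$ relevant $(\text{vertex},\text{block})$ incidences shows that with high probability every $N_{B}(v)$ and $N_{C}(v)$ meets every block in only $O(\Delta+\log k)=O(\Delta)$ vertices. Hence it suffices to tabulate $Q^{pq}$ on subset pairs of size $O(\Delta)$, of which there are $\binom{\Theta(\Delta^{2})}{O(\Delta)}^{2}=\Delta^{O(\Delta)}$, which is $\le\Delta^{6\Delta}$ once the constants are fixed; the precomputation then costs $O\!\big((mn/g^{2})\cdot\Delta^{6\Delta}\big)=O(mn\Delta^{6\Delta})$. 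This is exactly where the word-size hypothesis is used: a surviving subset of $O(\Delta)$ elements, each an index into a block of size $\mathrm{poly}(\Delta)$, is $O(\Delta\log\Delta)$ bits, and a pointer into the input and the tables is $O(\log kmn+\Delta\log\Delta)$ bits, so both fit in $O(1)$ words.

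In the query phase, for each $v\in A$ I would walk only over its relevant block pairs (located from per-vertex lists of the blocks met by $N_{B}(v)$ and by $N_{C}(v)$, built in $O(k(m+n))$ time in total), do one look-up $Q^{pq}$ per pair, and output $b=0$ as soon as one look-up succeeds; summing $mn/g^{2}\le mn/\Delta^{4}$ over the $k$ vertices yields the $O(kmn/\Delta^{4})$ term. The step I expect to be the real obstacle is precisely this reconciliation: choosing the block length together with a degree-based case split so that simultaneously (i) only $O(mn/\Delta^{4})$ block pairs are relevant per vertex, (ii) the surviving subsets are small enough that the tables respect the $\Delta^{6\Delta}$ budget, and (iii) the tables remain faithful for \emph{every} $v\in A$ --- it is (iii) that forces the probabilistic argument, which must then be derandomized (via an explicit splitter or by conditional expectations) or wrapped in a cheap re-randomization loop, so that the final algorithm is combinatorial rather than merely Monte Carlo. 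The rest --- fixing the Chernoff and union-bound constants, checking $\binom{\Theta(\Delta^{2})}{O(\Delta)}^{2}\le\Delta^{6\Delta}$, and verifying that the three terms $mn\Delta^{6\Delta}$, $kmn/\Delta^{4}$, and $k(m+n)$ absorb the precomputation, the queries, and the input handling --- is routine bookkeeping, though the accounting for vertices of very unbalanced degree needs care.
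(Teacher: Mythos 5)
There is a genuine gap, and it sits exactly at the step you flag as the ``real obstacle.'' Your query phase needs one table lookup per relevant block pair, which forces every $N_B(v)$ (and $N_C(v)$) to meet every block in only $O(\Delta)$ vertices. But the hypothesis only bounds the \emph{product} $d(v,B)d(v,C)\leq mn/\Delta^2$; a vertex may have degree $m$ to $B$ (and degree $\leq n/\Delta^2$ to $C$), and then no partition, random or not, keeps its per-block intersection small on the $B$ side -- the expected intersection with a block of length $\Theta(\Delta^2)$ is $\Theta(\Delta^2)$, and the table of size-$O(\Delta)$ subset pairs simply cannot answer the query. The ``work from the smaller side'' remark does not repair this, because the tabulated function $Q^{pq}$ takes small subsets on \emph{both} sides. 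Moreover, even for balanced vertices your quantitative claim $O(\Delta+\log k)=O(\Delta)$ is unjustified: in the intended application $\Delta=\Theta(\log n/(\log\log n)^2)$ while $k$ can be as large as $n$, so the union bound forces subset sizes of order $\log k\gg\Delta$, and then $\binom{\Theta(\Delta^2)}{O(\log k)}^2=2^{\Theta(\log k\log\Delta)}$ blows the $\Delta^{6\Delta}=2^{O(\Delta\log\Delta)}$ preprocessing budget. The derandomization you defer would also have to be carried out, since the lemma's algorithm is deterministic -- but that issue is moot because the probabilistic claim itself fails.

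The paper's proof avoids all of this with a purely deterministic device: it uses blocks of size $\Delta^3$ (not $\Delta^2$) and, for each $v$, greedily partitions $N_B(v)\cap B_i$ inside each block into pieces of size exactly $\Delta$ plus at most one leftover piece. A high-degree vertex then simply consumes several table entries per block; the number of pieces is at most $m/\Delta^3+d(v,B)/\Delta$ on the $B$ side (similarly for $C$), and expanding the product, the dominant term $d(v,B)d(v,C)/\Delta^2$ is bounded by $mn/\Delta^4$ using the degree-product hypothesis, while the choice of block length $\Delta^3$ makes the cross terms $\frac{m}{\Delta^3}\cdot\frac{n}{\Delta}$ also $mn/\Delta^4$. (With your block length $\Delta^2$ this fix would still fail, since the cross terms would be $mn/\Delta^3$ per vertex.) So no uniform per-block intersection bound, no randomness, and no derandomization are needed; the degree-product hypothesis is used only in the final summation, not to control the structure of the partition.
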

\begin{proof}
First, partition the vertices in $B$(resp. $C$) into groups $\{B_i\}$(resp. $\{C_i\}$) of sizes $\Delta^3$ arbitrarily, e.g. put $(i\Delta^3+1)$-th to $(i+1)\Delta^3$-th vertex of $B$(resp. $C$) in $B_i$(resp. $C_i$). Let $\mathcal{S}_B=\{S:|S|\leq\Delta,S\subseteq B_i\textrm{ for some $B_i$}\}$, $\mathcal{S}_C=\{S:|S|\leq\Delta,S\subseteq C_i\textrm{ for some $C_i$}\}$ be collections of subsets within the same group of $B$ or $C$ with at most $\Delta$ vertices. For every $S\in \mathcal{S}_B,S'\in\mathcal{S}_C$, we determine if there is at least one edge between them, and store all the results in a lookup table. This preprocessing takes 
\[
	O\left(\frac{mn}{\Delta^6}{\Delta^3\choose \Delta}^2\cdot \Delta^2\cdot \Delta^2\right)\leq O(mn\Delta^{6\Delta})
\]
time. Note that we can index a subset using $O(\Delta\log\Delta+\log \max\{m,n\})=O(w)$ bits. This table can be stored in the memory so that one table lookup takes constant time. 

With the help of this table, we can check if there is a triangle in $G$ efficiently. We go over all vertices $v_i\in A$, and partition its neighborhood into a minimum number of sets in $\mathcal{S}_B,\mathcal{S}_C$. That is, for every group of vertices, we arbitrarily partition $v_i$'s neighborhood in this group into sets of size exactly $\Delta$ and (possibly) one more set of size at most $\Delta$. 
 This generates at most $\frac{m}{\Delta^3}+\frac{d(v_i,B)}{\Delta}$ sets from $\mathcal{S}_B$ and at most $\frac{n}{\Delta^3}+\frac{d(v_i,C)}{\Delta}$ sets from $\mathcal{S}_C$. Using the lookup table, we can detect if there is an edge between any pair of sets from $\mathcal{S}_B$ and $\mathcal{S}_C$ in constant time. Going over all $v_i\in A$ takes
\[
	\begin{aligned}
		&O\left(\sum_{i=1}^k\left(\frac{m}{\Delta^3}+\frac{d(v_i,B)}{\Delta}\right)\left(\frac{n}{\Delta^3}+\frac{d(v_i,C)}{\Delta}\right)+k(m+n)\right) \\
		&\leq O\left(\frac{kmn}{\Delta^6}+\sum_{i=1}^k\frac{m}{\Delta^3}\cdot\frac{n}{\Delta}+\sum_{i=1}^k\frac{m}{\Delta}\cdot\frac{n}{\Delta^3}+\sum_{i=1}^k\frac{mn}{\Delta^4}+k(m+n)\right) \\
		&\leq O\left(\frac{kmn}{\Delta^4}+k(m+n)\right)
	\end{aligned}
\]
time. The total running time is at most $O\left(mn\Delta^{6\Delta}+\frac{kmn}{\Delta^4}+k(m+n)\right)$ as we stated. 

\qed
\end{proof}

Using this algorithm for the sparse case as a subroutine, we give a fast combinatorial algorithm for triangle detection. 

\begin{restate}
Given a tripartite graph $G$ on $n$ vertices, we can detect if there is a triangle in $G$ using a combinatorial algorithm in $\hat{O}\left(n^3/\log^4 n\right)$ time on a word RAM with word size $w\geq \Omega(\log n)$. 
\end{restate}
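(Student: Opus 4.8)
The plan is to apply a two-sided version of Chan's divide-and-conquer directly to the whole graph, using Lemma~\ref{SparseAlgo} to terminate the recursion, and to take the Four-Russians parameter as large as the preprocessing can afford: let $\Delta$ be the largest integer with $\Delta^{6\Delta}\le\sqrt{n}$, so that $\Delta=\Theta(\log n/\log\log n)$ and $\Delta\log\Delta=O(\log n)$.

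The recursive procedure $\textsc{Tri}(A,B,C)$ works as follows. If $B$ or $C$ is empty (or $|A||B||C|$ is below a fixed $\mathrm{poly}(\Delta)$ threshold), answer directly. Otherwise, look for a vertex $v\in A$ with $d(v,B)\,d(v,C)>|B||C|/\Delta^{2}$. If there is none, the hypothesis of Lemma~\ref{SparseAlgo} holds, so run that subroutine and return its answer. If such a $v$ exists, enumerate every pair in $N_B(v)\times N_C(v)$ and test it for an edge; if one is an edge, report a triangle. Otherwise $N_B(v)\times N_C(v)$ is now a certified non-edge region, and we recurse on the two instances $(A\setminus\{v\},\,B,\,C\setminus N_C(v))$ and $(A\setminus\{v\},\,B\setminus N_B(v),\,N_C(v))$, reporting a triangle iff one of the calls does. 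Correctness is an induction on $|A|$: a triangle through $v$ is caught by the enumeration, while for a triangle $\{a,u,w\}$ with $a\ne v$ the edge $uw$ forces $(u,w)\notin N_B(v)\times N_C(v)$, so $(u,w)$ falls into $B\times(C\setminus N_C(v))$ or $(B\setminus N_B(v))\times N_C(v)$, and $a$ survives in both subinstances, so the triangle persists in at least one recursive call.

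For the running time, consider the recursion tree. The two subinstances of any node partition that node's $B\times C$ rectangle with the certified non-edge region removed; hence siblings have disjoint $B\times C$ products, and therefore so do the leaves (an antichain), and so do the certified non-edge regions of distinct internal nodes. This gives three bounds. First, $\sum_{\text{internal }v}d(v,B_v)\,d(v,C_v)\le|B||C|=O(n^{2})$, so all the explicit pair-testing costs $O(n^{2})$ in total. Second, $\sum_{\text{leaves}}|B_\ell||C_\ell|\le|B||C|=O(n^{2})$, so the Four-Russians preprocessing performed inside the leaf calls of Lemma~\ref{SparseAlgo} totals at most $\Delta^{6\Delta}\cdot O(n^{2})\le\sqrt{n}\cdot O(n^{2})=o(n^{3})$. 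Third, $\sum_{\text{leaves}}|A_\ell||B_\ell||C_\ell|\le|A|\sum_{\text{leaves}}|B_\ell||C_\ell|\le|A|\,|B||C|=O(n^{3})$, so the main term $|A_\ell||B_\ell||C_\ell|/\Delta^{4}$ of Lemma~\ref{SparseAlgo}, summed over leaves, is $O(n^{3}/\Delta^{4})=\hat O(n^{3}/\log^{4}n)$. Since $\Delta\log\Delta+\log(|A||B||C|)=O(\log n)$, the word-size requirement of Lemma~\ref{SparseAlgo} is met by $w\ge\Omega(\log n)$, and the remaining per-node overhead (locating the branching vertex, forming the subinstances, trivial base cases) is easily dominated.

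I expect the genuinely delicate point to be this accounting, in two respects: recognizing that the disjointness of the leaves' $B\times C$ rectangles is exactly what rules out a blow-up both in the number of leaves and in the summed volume $\sum_\ell|A_\ell||B_\ell||C_\ell|$; and bounding the cost of materializing the (possibly large) non-neighbor sets $B\setminus N_B(v)$ and $C\setminus N_C(v)$ at each branch — the step Chan's analysis works hardest to control — which is again kept $o(n^{3})$ by the degree lower bound $d(v,B_v)\,d(v,C_v)>|B_v||C_v|/\Delta^{2}$. Everything else (correctness, the sparse subroutine, the choice of $\Delta$) is routine given Lemma~\ref{SparseAlgo}.
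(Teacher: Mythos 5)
Your algorithm and the paper's are structurally the same (find a vertex $v$ with $d(v,B)\,d(v,C)>|B||C|/\Delta^2$, enumerate $N_B(v)\times N_C(v)$, and recurse on the two pieces of $B\times C\setminus(N_B(v)\times N_C(v))$), and your observation that the leaves' $B\times C$ rectangles and the internal nodes' carved rectangles form disjoint families is correct and does bound the pair-testing cost, the Four-Russians preprocessing, and the $kmn/\Delta^4$ term of Lemma~\ref{SparseAlgo} exactly as you say. But the step you dismiss as ``easily dominated'' --- the $O(|A|(|B|+|C|))$ per-node overhead for locating the branching vertex and materializing the subinstances, and the matching $k(m+n)$ term inside Lemma~\ref{SparseAlgo} --- is in fact the place where your accounting breaks, and it is precisely why the paper's algorithm has a \emph{nontrivial base case} at $|B|<\Delta^6$ or $|C|<\Delta^6$.

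Here is the problem concretely. Disjointness gives you $\sum_{\text{nodes}}|B_v||C_v|\le\Delta^2 n^2$ and at most $n^2$ nodes, but $|B_v|+|C_v|$ is not controlled by $|B_v||C_v|$ when one side is small: a node with $|B_v|=1$, $|C_v|=n$, $|A_v|=n$ contributes $\Theta(n^2)$ to the overhead while carving out as little as $n/\Delta^2$ area, i.e. it pays $\Theta(n\Delta^2)$ per carved pair, versus the $\hat O(n/\log^4 n)$ you need. Your base case (``$B$ or $C$ empty, or $|A||B||C|<\mathrm{poly}(\Delta)$'') does not stop this: with $|B|=O(1)$ and $|A|,|C|=\Theta(n)$ the product is still huge. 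Summing $|A_\ell|(|B_\ell|+|C_\ell|)$ over leaves, the only bound disjointness gives you is $O(n^3)$, not $\hat O(n^3/\log^4 n)$, and for internal nodes it is even worse. The paper sidesteps exactly this by cutting to exhaustive search once $\min(|B|,|C|)<\Delta^6$; at every surviving node $|A|(|B|+|C|)\le 2|A||B||C|/\Delta^6$ so the overhead becomes a lower-order term in the per-pair charge. The price of that base case is that exhaustive search costs $|A||B||C|$ (not $|B||C|$), so disjointness of $B\times C$ rectangles no longer suffices; this is what forces the paper's $T(S)$ recurrence, the ``choose which side to split'' WLOG in Step 3 (needed to guarantee $t>1/\Delta$ on every right move), and the two-tree $\mathcal R/\mathcal R'$ argument showing $T(n^2)=\hat O(n^3/\log^{10}n)$. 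None of that is optional: without the base case you have the overhead problem above, and with it you need the $T(S)$ analysis. Your disjointness observation is the right intuition but it only carries the ``large-graph'' half of the paper's two-part accounting.
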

\begin{proof}
Set parameter $\Delta=\frac{\log n}{100\left(\log\log n\right)^2}$. It will remain fixed as we do the recursion, even if the instance size shrinks. The following algorithm detects if there is a triangle in a tripartite graph with vertex sets $A,B,C$: 

\begin{framed}
\begin{changemargin}{0.6cm}{0cm} 
\begin{enumerate}
	\item[Step 0:]
		If $|B|<\Delta^6$ or $|C|<\Delta^6$, solve the instance by exhaustive search and return the answer. 
	\item[Step 1:]
		If for all vertices $v_i\in A$, $d(v_i,B)d(v_i,C)\leq \frac{|B|\cdot |C|}{\Delta^2}$, we solve the instance by the algorithm in Lemma~\ref{SparseAlgo}.
	\item[Step 2:]
		Otherwise, find a vertex that violates the condition. Without loss of generality, assume $v_1$ does, $d(v_1,B)d(v_1,C)>\frac{|B|\cdot|C|}{\Delta^2}$. 
	\item[Step 3:]
		Let $B_1$(resp. $C_1$) be $v_1$'s neighborhood in $B$(resp. $C$). \\		
		If $\frac{|B_1|}{|B|}>\frac{|C_1|}{|C|}$, then recurse on $(A\setminus\{v_1\},B,C\setminus C_1)$ and $(A\setminus\{v_1\},B\setminus B_1,C_1)$, 		
		else recurse on $(A\setminus\{v_1\},B\setminus B_1,C)$ and $(A\setminus\{v_1\},B_1,C\setminus C_1)$. \\		
		Return YES if either of the two recursions returned YES. 
	\item[Step 4:]
		Check all pairs of vertices in $B_1\times C_1$ for an edge. \\
		Return YES if there is an edge, NO otherwise. 
\end{enumerate}
\end{changemargin}
\end{framed}

{\noindent \bf Analysis of the running time.}

In each node of the recursion tree, only some of the following four subprocedures are executed:
\begin{enumerate}
	\item
		If either $|B|$ or $|C|$ is small, we do exhaustive search, which takes $O(|A||B||C|)$ time. 
	\item
		We spend $O\left(|A|(|B|+|C|)\right)$ time to check whether there is a high degree vertex and generate the inputs for two recursive calls. 
	\item
		If $A$ has no high degree nodes, we invoke the algorithm in Lemma~\ref{SparseAlgo} which takes 
		\[
		\begin{aligned}
		&O\left(|B| |C|\Delta^{6\Delta}+\frac{|A||B||C|}{\Delta^4}+|A|(|B|+|C|)\right) \\ 
		&=\hat{O}\left(|B| |C|n^{o(1)}+|A||B||C|/\log^4 n+|A|(|B|+|C|)\right)
		\end{aligned}
		\]
		time. 
	\item
		For every pair of neighbors of $v_1$ in $B$ and $C$, we check if they have an edge. This step takes $O(|B_1||C_1|)$ time. 
\end{enumerate}

To analyse the total running time, we are going to bound the time we spend on the small-graph case (Subprocedure 1) and the time we spend on the large-graph case (Subprocedure 2,3,4) in the entire execution of the algorithm (the whole recursion tree) separately, and sum up these two cases. 

For the case when $|B|\geq \Delta^6$ and $|C|\geq \Delta^6$, Subprocedure 2 is cheap compared to the other steps, taking time $O(|A|(|B|+|C|))\leq \hat{O}\left(n|B||C|/\log^6 n\right)$.
In this case, we mentally charge all the running time to the pairs of vertices in $B\times C$, then sum up over all pairs the cost they need to pay. If $A$ has no high degree vertices, we will run Subprocedure 2 and 3, which takes at most $\hat{O}(|B| |C|n^{o(1)}+|A||B||C|/\log^4 n+|A|(|B|+|C|))\leq \hat{O}\left(n|B||C|/\log^4 n\right)$
time. We charge this running time to all pairs of vertices in $B\times C$ evenly; that is, every pair of vertices gets charged $\hat{O}\left(n/\log^4 n\right)$. If $A$ has a high degree vertex, we will run Subprocedure 2 and 4, which takes at most
\[
	\begin{aligned}
	&O\left(|A|(|B|+|C|)+|B_1||C_1|\right) \\
	&\leq \hat{O}\left(\Delta^2|B_1||C_1|n/\log^6 n+|B_1||C_1|\right) \\
	&\leq \hat{O}\left(n|B_1||C_1|/\log^4 n\right)
	\end{aligned}
\]
time. We charge this running time to the pairs in $B_1\times C_1$ evenly, so every pair gets charged $\hat{O}\left(n/\log^4 n\right)$. 

\begin{claim}
Every pair of vertices is charged at most once, over the entire execution of the algorithm.
\end{claim}

The proof of this claim follows from inspection of the algorithm: the above argument only charges pairs of vertices that are not going into the same recursive branch.

There are $n^2$ pairs at the very beginning. Every pair gets charged at most $\hat{O}\left(n/\log^4 n\right)$. Therefore the running time for the large-graph case is at most $\hat{O}\left(n^3/\log^4 n\right)$.

Next we bound the total running time of Subprocedure 1. This running time is proportional to the number of triples we enumerated in Step 0. Let $T(S)$ be the maximum possible of this number of triples, if we start our recursion from vertex sets $A,B,C$ with $|B||C|\leq S$. 

\begin{claim}\label{claimSmallGraph}
	$T(S)\leq nS$. Moreover, for $S>n\Delta^6$, then 
	\[T(S)\leq \max_{t>1/\Delta,t'>1/\Delta^2}\{T((1-t')S)+T\left(t'(1-t)S\right)\}\]
\end{claim}
\begin{proof}
	$T(S)\leq nS$ is trivial, since we never enumerate any triple more than once. If $S>n\Delta^6$, we have $|B|,|C|>\Delta^6$. That is, we must be starting the recursion from a large-graph case. There is nothing to prove if there is no high degree vertex in $A$, as we will do no enumeration in Step 0. Otherwise, let $t=\max\left\{\frac{|B_1|}{|B|},\frac{|C_1|}{|C|}\right\},t'=\min\left\{\frac{|B_1|}{|B|},\frac{|C_1|}{|C|}\right\}$, then by the algorithm, we have $tt'>1/\Delta^2$, in particular, $t>1/\Delta,t'>1/\Delta^2$. In the two recursive calls, we have $|B||C|$ values $(1-t')S$ and $t'(1-t)S$ respectively. By the definition of $T$, we will enumerate at most $T\left((1-t')S\right)+T\left(t'(1-t)S\right)$ triples. This proves the claim. 
	
\qed
\end{proof}

We can upper bound $T(S)$ using this recurrence. Consider the recursion tree $\mathcal{R}$ for $T(S)$. The root has value $S$. Its left child has value $(1-t')S$ and right child has value $t'(1-t)S$, where $t$ and $t'$ maximize $T\left((1-t')S\right)+T\left(t'(1-t)S\right)$. We recursively construct the tree for the left child and right child. 
For a node with value $x$, we always put $(1-t')x$ in its left child and $t'(1-t)x$ in its right child, for $x$'s optimal parameter $t'$ and $t$. 
We expand the tree from nodes with value at least $n^{1.5}(>n\Delta^6)$ recursively. Therefore, we will get a tree with leaf values at most $n^{1.5}$. This tree demonstrates how $T(S)$ is computed according to the recurrence, before we reach $n^{1.5}$. The sum of values of all leaves multiplied by $n$ is an upper bound for $T(S)$, since $T(x)\leq nx$. 

We calculate this sum in two cases. For every leaf, there is a unique path from the root to it, in which we follow the left child in some steps, follow the right child in the rest. Consider all leaves such that the unique path from the root to it takes at most $10\Delta\log\Delta (\approx \frac{\log n}{10\log\log n})$ right-child-moves. Since $t'>1/\Delta^2$, the depth of the tree is at most $\Delta^2\log n$. Therefore, there are at most 
\[
	{\Delta^2\log n \choose 10\Delta\log\Delta}\cdot 10\Delta\log \Delta\leq 2^{3\Delta\log^2\Delta}\leq n^{0.4}
\]
such leaves. By construction, each leaf has value at most $n^{1.5}$, so the sum over all leaves is at most $n^{1.9}$. 

For those leaves such that the path from the root takes more than $10\Delta\log\Delta$ right-child-moves, consider a tree $\mathcal{R}'$ with \emph{identical structure} as $\mathcal{R}$, but we set the values of nodes in a different way. We set all $t$s to $0$ but leave all $t'$s unchanged, then calculate the corresponding values. That is, the root still has value $S$, its left child still has value $(1-t')S$, and its right child now has value $t'S$.
For any node with value $x$ in $\mathcal{R}'$, its left child has value $(1-t')x$ and right child has value $t'x$, for the same ratio $t'$ as the corresponding node in $\mathcal{R}$, although the value $x$ may have changed. Also, leaves now may have values greater than $n^{1.5}$ by the construction of $\mathcal{R}'$. The tree $\mathcal{R}'$ has the following two properties:
\begin{enumerate}
	\item
		The sum of values in all leaves is exactly $S$. 
	\item
		For any node such that the path from root to it takes at least $k$ right-child-moves, its value is at least $1/(1-1/\Delta)^k$ times the value of the corresponding node in $\mathcal{R}$. 
\end{enumerate}

The first property can be proved by induction, since the sum of values of two children is exactly the value of the parent. For the second property, since we require $t>1/\Delta$ in $\mathcal{R}$, and keep $t'$ unchanged, set $t$ to $0$, we will gain a factor of $1/(1-t)>1/(1-1/\Delta)$ for every right-child-move. 

By property 2 above, for all leaves that the path from root takes more than $10\Delta\log\Delta$ right-child-moves, their values in $\mathcal{R}'$ must be at least $1/(1-1/\Delta)^{10\Delta\log\Delta}\geq \Delta^{10}$ times the corresponding values in $\mathcal{R}$. However, the sum of these values in $\mathcal{R}'$ is at most $S$ by property 1. Therefore, the sum of values in $\mathcal{R}$ is at most $S/\Delta^{10}$. Summing these two cases up, we prove that $T(S)\leq nS/\Delta^{10}+n^{2.9}$. In particular, $T(n^2)\leq n^3/\Delta^{10}+n^{2.9}=\hat{O}(n^3/\log^{10} n)$. 

Finally, we sum up the small-graph and large-graph cases, proving that the algorithm runs in $\hat{O}(n^3/\log^4 n)$ time. 

\qed
\end{proof}

{\bf Remark.}\quad In Lemma~\ref{SparseAlgo}, we can preprocess for all subsets of size at most $O(\log k/\log\log k)$ instead of $O(\Delta)$. This improves the running time of Theorem~\ref{TriDectAlgo} from $O\left(\frac{n^3\left(\log\log n\right)^8}{\log^4 n}\right)$ to $O\left(\frac{n^3\left(\log\log n\right)^6}{\log^4 n}\right)$. 

Combining the above algorithm with the reduction by Vassilevska Williams and Williams~\cite{VW10}, we get an efficient combinatorial algorithm for BMM. 

\begin{theorem}[Vassilevska Williams and Williams'10]
For any constant $c$, if we can solve triangle detection on $n$-node graphs in $O(n^3/\log^c n)$ time, we can also solve Boolean matrix multiplication on $n\times n$ matrices in the same running time. 
\end{theorem}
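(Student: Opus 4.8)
The plan is to reduce the product of two $n\times n$ Boolean matrices $A,B$ (intended output $C=AB$) to many triangle-detection calls on graphs of size roughly $n^{1/3}$, using the fact that for a polylogarithmic speedup the shrinkage from $n$ to $n^{1/3}$ costs nothing: $\log^c(n^{1/3})=(\tfrac13\log n)^c=\Theta(\log^c n)$ since $c$ is constant. Fix a block size $s=n^{1/3}$ and split the row indices, the ``middle'' indices, and the column indices of the product each into $N=n/s$ consecutive groups $I_1,\dots,I_N$, $K_1,\dots,K_N$, $J_1,\dots,J_N$ of size $s$. I compute $C$ block by block: for each pair $(a,c)$ I determine the $s\times s$ submatrix $C[I_a,J_c]$, maintaining a set $U_{a,c}\subseteq I_a\times J_c$ of still-undetermined pairs, initialized to all of $I_a\times J_c$.

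The core subroutine, for a fixed $(a,c)$, iterates over the middle blocks $b=1,\dots,N$. For each $b$ I build a tripartite graph on $I_a\cup K_b\cup J_c$ (so $3s$ vertices, with the obvious tripartition), putting an edge $i\!-\!k$ whenever $A_{ik}=1$, an edge $k\!-\!j$ whenever $B_{kj}=1$, and an edge $i\!-\!j$ for every currently-undetermined pair $(i,j)\in U_{a,c}$. A triangle $(i,k,j)$ in this graph is exactly a witness that $C_{ij}=1$ via some $k\in K_b$ with $(i,j)$ not yet known. So I repeatedly find such a triangle, set $C_{ij}\leftarrow 1$, delete the edge $i\!-\!j$ (remove $(i,j)$ from $U_{a,c}$), and search again; once the graph is triangle-free I advance to $b+1$. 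At the end, every pair never removed is declared $0$. Correctness is immediate: $C_{ij}=1$ iff some $k$ has $A_{ik}=B_{kj}=1$, and while the block $b$ containing that $k$ is processed with $(i,j)$ still undetermined there is a triangle through $(i,j)$, so it gets removed; conversely nothing spurious is ever removed.

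For the running time I charge the calls into two buckets. Each ``triangle-free'' outcome terminates the processing of one triple $(a,b,c)$, so there are at most $N^3$ of these. Each ``found a witness'' outcome permanently deletes a distinct pair from some $U_{a,c}$, and across all $(a,c)$ these pairs are globally distinct, so there are at most $n^2$ of these; since $s=n^{1/3}$ gives $N^3=n^2$, the total number of graph calls is $O(n^2)$. Every call runs on a $3s$-vertex graph and hence costs $O((3s)^3/\log^c(3s))=O(s^3/\log^c s)=O(n/\log^c n)$ by the hypothesis and the observation above, and the per-call bookkeeping (reading the relevant $s\times s$ sub-blocks of $A,B$, updating $U_{a,c}$) is $O(s^2)$, contributing $O(N^3 s^2)=O(n^{8/3})$ in all. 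Thus the algorithm runs in $O(n^2\cdot n/\log^c n+n^{8/3})=O(n^3/\log^c n)$, and since it performs no ring operations it is combinatorial.

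The one genuine gap is that the hypothesis supplies only a \emph{decision} procedure, whereas I must actually \emph{produce} a witness triangle, and I cannot afford the naive $\Theta(\log n)$-factor overhead of binary search. So the step I expect to be delicate is upgrading triangle-\emph{detection} in time $T(m)=O(m^3/\log^c m)$ to triangle-\emph{finding} in time $O(T(m))$ (on tripartite graphs, which suffices since general and tripartite triangle problems are time-equivalent). I would do this by a self-reduction: pad the tripartite input so all three parts have $m$ vertices, split each part into two near-equal halves, note that any triangle lies in one of the $8$ resulting tripartite sub-instances (each on at most $3\lceil m/2\rceil$ vertices), run the detector on all $8$, and recurse into a positive one, brute-forcing when the instance is $O(1)$. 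The total cost is $\sum_{\ell\ge 0}8\,T(3m/2^{\ell})$; the numerators $(m/2^\ell)^3$ decay geometrically and dominate the mild variation of the $\log^{-c}$ factor (the tail with $m/2^\ell<\sqrt m$ contributes only $O(m^{1.5}\log m)$), so the sum is $O(T(m))$. Substituting this witness-finding routine for ``find such a triangle'' above completes the reduction; in particular, feeding in the $\hat O(n^3/\log^4 n)$-time detector of Theorem~\ref{TriDectAlgo} recovers the combinatorial Boolean matrix multiplication bound stated earlier.
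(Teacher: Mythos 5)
Your reduction is correct, and it is essentially the same argument the paper relies on: the paper states this theorem by citation to Vassilevska Williams and Williams, whose reduction is exactly this decomposition into $n^{1/3}$-sized blocks with the ``delete a witnessed pair and re-detect'' loop, using the fact that $\log^c(n^{1/3})=\Theta(\log^c n)$ for constant $c$. Your handling of the detection-to-finding issue via the $8$-way self-reduction with geometrically decaying cost is also the standard fix and is sound, so there is no gap to flag.
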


\begin{restate}
There is a combinatorial algorithm to multiply two $n\times n$ Boolean matrices in $\hat{O}\left(n^3/\log^4 n\right)$ time. 
\end{restate}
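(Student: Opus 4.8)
The plan is to obtain the statement as a black-box composition of the triangle detection algorithm of Theorem~\ref{TriDectAlgo} with the subcubic reduction of Vassilevska Williams and Williams~\cite{VW10}, which the paper records in two forms: the qualitative one stated just above (an $O(n^3/\log^c n)$-time triangle detector yields a BMM algorithm of the same asymptotic running time, for every constant $c$), and the quantitative one recalled in the introduction (an $O(n^3/g(n))$-time triangle detector on $n$-node graphs yields an $O(n^3/g(n^{1/3}))$-time algorithm for BMM on $n \times n$ matrices). First I would recall the reduction only to the depth we need here. From Boolean matrices $X$ and $Y$ one builds the tripartite ``product graph'' whose three parts index the rows of $X$, the common index, and the columns of $Y$; the $X$-edges and $Y$-edges encode the two inputs, and a complete bipartite family of ``query'' edges is placed between the first and third parts, so that entry $(i,j)$ of the product is $1$ exactly when the row vertex $i$ and the column vertex $j$ lie in a common triangle. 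Since a detector returns only a single bit, the full product is read off by a divide-and-conquer that bisects the row and column index sets and prunes any block the detector certifies to be triangle-free; this adds only an $n^{o(1)}$ (indeed polylogarithmic) multiplicative overhead over the cost of the detection calls, which run on the resulting $\Theta(n^{1/3})$-vertex subinstances.

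The second step is to push the bound of Theorem~\ref{TriDectAlgo} through this reduction while keeping the $\hat{O}$ honest. Writing that bound as $O(n^3/g(n))$ with $g(n) = \log^4 n/\mathrm{polyloglog}(n)$, the quantitative form gives BMM in time $O(n^3/g(n^{1/3}))$, and $g(n^{1/3}) = \Theta(\log^4 n/\mathrm{polyloglog}(n))$: the substitution $n \mapsto n^{1/3}$ replaces $\log n$ by $\tfrac{1}{3}\log n$, which is a constant-factor change in the degree-$4$ logarithm, and $\mathrm{polyloglog}(n^{1/3}) = \Theta(\mathrm{polyloglog}(n))$. Hence BMM runs in $O(n^3/g(n^{1/3})) = \hat{O}(n^3/\log^4 n)$. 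Finally, the algorithm is combinatorial: the reduction consists solely of elementary graph constructions, bisections, and bookkeeping, with no algebraic ingredient, so composing it with the combinatorial detector of Theorem~\ref{TriDectAlgo} leaves the resulting algorithm combinatorial.

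I do not anticipate a real obstacle. The reduction is standard, and the only two points requiring a little care are both routine: checking that the $\mathrm{polyloglog}$ slack survives the substitution $n \mapsto n^{1/3}$ unchanged up to constants, and checking that the divide-and-conquer used to promote triangle \emph{detection} to the computation of every entry of the product costs only a polylogarithmic, hence $\hat{O}$-absorbed, overhead. With these in hand the theorem follows immediately.
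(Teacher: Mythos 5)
Your proposal is correct and follows essentially the same route as the paper: the paper also obtains this theorem by black-box composition of Theorem~\ref{TriDectAlgo} with the Vassilevska Williams--Williams reduction, noting that a polylogarithmic savings $g(n)=\log^4 n/\mathrm{polyloglog}(n)$ survives the substitution $n\mapsto n^{1/3}$ up to constant factors. Your inline sketch of the reduction's internal mechanics is a loose paraphrase of the actual construction, but since you invoke the cited reduction as a black box with its quantitative guarantee, this does not affect the argument.
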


\section{A General Approach}\label{secGeneral}

In this section, we propose a more general approach for triangle finding, which may lead to an even faster combinatorial algorithm. 

In the algorithm presented in Section~\ref{secTriFind}, finding a high degree vertex $v\in A$ and its neighborhood $B_1,C_1$ can be viewed as finding a large easy part of the input. That is, for the subgraph induced by vertices $A,B_1,C_1$, there is a 2-path for every pair in $B_1\times C_1$.
Thus, we only have to spend $O(|B_1||C_1|)$ time to determine if there is a triangle in it, which is $O(1/|A|)$ time on average for every triple of vertices.
After solving this part of the input, we do two recursive calls which together exactly cover the rest of the triples. 
The high-degree of $v$ guarantees that the easy part we find each time cannot be too small. We will have saved enough time before reaching the case where $B$ or $C$ is close to constant size (in which we basically have no way to beat the exhaustive search). However, if all vertices have low degree, then that instance itself is easy (via Lemma~\ref{SparseAlgo}). 
The following theorem generalizes this idea of reducing the triangle detection problem to finding a large subgraph on which triangle detection is easy. 

\begin{restate}
Let $n$ be an integer, $0 < \alpha, \beta, \gamma, c\leq 1$, and $G$ be any tripartite graph on vertex sets $A, B, C$ with at least $\sqrt{n}$ vertices in each part. If there is an algorithm $\mathbf{L}$ for large subgraph triangle detection on every such $G$ that runs in $O(c\alpha\beta\gamma |A||B||C|)$ time, then we can solve triangle detection on $n$-node graphs in $O(cn^3+n^{3-\epsilon/2})$ time for any $\epsilon>0$ such that $\alpha\beta>10\epsilon(1+\log\frac{1}{\gamma})$ and $\alpha>10\epsilon(1+\log\frac{1}{\beta})$. 
\end{restate}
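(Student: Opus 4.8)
The plan is to extend the divide-and-conquer of Theorem~\ref{TriDectAlgo}. On a tripartite instance $(A,B,C)$ the algorithm does the following: if one of the parts has fewer than $\sqrt{n}$ vertices, solve the instance by exhaustive search in time $O(|A||B||C|)$; otherwise call $\mathbf{L}$ to obtain $(G',b)$ with $G'$ supported on $A'\cup B'\cup C'$, $|A'|\ge\alpha|A|$, $|B'|\ge\beta|B|$, $|C'|\ge\gamma|C|$. If $b=0$, report a triangle. If $b=1$, then $A'\times B'\times C'$ is triangle-free, so every triangle of $(A,B,C)$ has its $A$-endpoint outside $A'$, or its $B$-endpoint outside $B'$, or its $C$-endpoint outside $C'$; hence recurse on $(A\setminus A',\,B,\,C)$, on $(A',\,B\setminus B',\,C)$, and on $(A',\,B',\,C\setminus C')$, whose triple sets partition $(A\times B\times C)\setminus(A'\times B'\times C')$, and report a triangle if some recursive call does. (When $\alpha=1$ the first branch is empty and this is precisely the two-way split of Theorem~\ref{TriDectAlgo}.) No part is ever copied or enlarged, so all parts stay $\le n$, and the triangle-free blocks $A'\times B'\times C'$ produced by the $\mathbf{L}$-calls together with the exhaustive-search leaves partition the set of $\le n^{3}$ triples.

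The running time is the total cost of the $\mathbf{L}$-calls plus the total cost of the leaves. An $\mathbf{L}$-call at a node with $|A||B||C|=V$ runs in time $O(c\alpha\beta\gamma V)\le O(c\,|A'|\,|B'|\,|C'|)$, i.e.\ $O(c)$ per triple of the block it removes; since these blocks are pairwise disjoint over the whole recursion, the total $\mathbf{L}$-cost is $O(cn^{3})$. This gives the first term of the claimed bound and uses nothing about $\alpha,\beta,\gamma,\epsilon$.

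The crux is to bound $\sum_{\ell}|A_\ell||B_\ell||C_\ell|$ over the leaves $\ell$ by $O(n^{3-\epsilon/2})$, and for this I would generalize and iterate the recursion-tree argument in the proof of Theorem~\ref{TriDectAlgo}. Give each node value $V=|A||B||C|$; in the worst case the three children of a value-$V$ node have values $(1-\alpha)V$, $\alpha(1-\beta)V$, $\alpha\beta(1-\gamma)V$, which sum to $(1-\alpha\beta\gamma)V$, so value decays geometrically with depth. Classify each root-to-leaf path by its number of second-type (``$B$'') and third-type (``$C$'') steps. To handle the paths with many ``$C$''-steps, use the conservation device of Theorem~\ref{TriDectAlgo}: replace the third child's value by $\alpha\beta V$ to get a tree $\mathcal{R}'$ of the same shape whose children sum to exactly $V$, so that its leaf values sum to $n^{3}$; since each ``$C$''-step costs the true tree a factor $1-\gamma$ relative to $\mathcal{R}'$, the leaves with at least $R$ many ``$C$''-steps contribute at most $(1-\gamma)^{R}n^{3}$. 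A second such modification — raising the second child's value while keeping the third child — lets one bound the leaves with at least $Q$ many ``$B$''-steps by $(1-\beta)^{Q}n^{3}$. What remains are the leaves with fewer than $Q$ ``$B$''-steps and fewer than $R$ ``$C$''-steps; such a path is a run of first-type (``$A$'') steps with few deviations, so a multinomial estimate bounds their number, while each such leaf has a part of size $<\sqrt{n}$ and hence value below $\sqrt{n}$ times the product of its other two (at most $n$) parts. Choosing the thresholds $Q,R$ to balance these three contributions against the depth of the tree (which is governed by how fast $|A|$ shrinks) is exactly where the hypotheses $\alpha>10\epsilon\bigl(1+\log\tfrac{1}{\beta}\bigr)$ and $\alpha\beta>10\epsilon\bigl(1+\log\tfrac{1}{\gamma}\bigr)$ are used, and they yield $\sum_\ell|A_\ell||B_\ell||C_\ell|=O(n^{3-\epsilon/2})$; the $O\bigl(\sum_{\text{nodes}}(|A|+|B|+|C|)\bigr)$ overhead of forming sub-instances is lower order.

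I expect the main obstacle to be precisely this last analysis: getting the two nested conservation steps to cooperate, counting the low-deviation leaves tightly enough that $n^{2.5}$ times that count still sits below $n^{3-\epsilon/2}$, and verifying that the threshold choices turn into exactly the two stated inequalities on $\epsilon$. The remaining pieces — the $\mathbf{L}$-accounting, the geometric decay of node values, the $O(|A||B||C|)$ leaf bound, and the reduction to the ``exact-fraction'' worst case — are routine and follow the pattern of Theorem~\ref{TriDectAlgo}.
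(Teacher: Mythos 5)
There is a genuine gap, and it is in the branching step itself, before the deferred analysis even starts. You recurse on $(A\setminus A',B,C)$, $(A',B\setminus B',C)$, $(A',B',C\setminus C')$, so the ``heavy'' branch shrinks by the factor $1-\alpha$. The paper recurses on $(A,B,C\setminus C')$, $(A,B\setminus B',C')$, $(A\setminus A',B',C')$, so its heavy branch shrinks by $1-\gamma$. This is not a cosmetic difference: the hypotheses $\alpha\beta>10\epsilon(1+\log\frac{1}{\gamma})$ and $\alpha>10\epsilon(1+\log\frac{1}{\beta})$ are asymmetric in $\alpha,\gamma$ and are tailored to the paper's orientation (the fractions kept intact in the heavy branch, $\alpha\beta$, must beat $\epsilon\log\frac{1}{\gamma}$, where $\gamma$ governs how slowly that branch shrinks and hence how deep it goes). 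With your orientation the claimed bound is actually false in the worst case, so no analysis can rescue it. Concretely, take $\alpha=\tfrac12$, $\beta=1$, $\gamma=2^{-100}$, and $\epsilon\approx\frac{1}{2\cdot 10\cdot 101}$, which satisfies both hypotheses. In your recursion every child has value at most half of its parent, so all leaves sit at depth about $\tfrac12\log_2 n$, while each level removes only an $\alpha\beta\gamma$-fraction of the surviving mass; the leaves therefore retain about $n^3(1-\gamma/2)^{\log_2\sqrt n}=n^{3-\Theta(\gamma)}$ triples, all of which are exhaustively searched. Since $\Theta(\gamma)=\Theta(2^{-100})\ll\epsilon/2\approx 2.5\cdot 10^{-4}$, this exceeds $n^{3-\epsilon/2}$ (and $cn^3$ can be negligible), so your variant misses the theorem. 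In the paper's orientation the heavy branch has factor $1-\gamma$, the tree is $\Theta(\gamma^{-1}\log n)$ deep along it, and the accumulated savings per unit of shrinkage is $\Theta(\alpha\beta)$, which the hypotheses do make much larger than $\epsilon$ --- that is exactly what the inequality $\alpha\beta>10\epsilon(1+\log\frac{1}{\gamma})$ buys.

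Separately, even with the correct orientation, your proposal defers the decisive quantitative step: the two-threshold tree surgery (bounding leaves with many ``$B$''-steps and many ``$C$''-steps via two conservation trees, then counting the low-deviation leaves) is only sketched, and you acknowledge that making the thresholds produce the two stated inequalities is the open obstacle. The paper does not need any of this machinery here: it proves by induction on $S$ that the number of exhaustively enumerated triples satisfies $T(S)\leq n^{2.5}\left(\frac{S}{n^{2.5}}\right)^{1-\epsilon}$, using the one-step recurrence $T(S)\leq T((1-\gamma)S)+T(\gamma(1-\beta)S)+T(\beta\gamma(1-\alpha)S)$ and the elementary estimate $(1-\gamma)^{1-\epsilon}+(\gamma(1-\beta))^{1-\epsilon}+(\beta\gamma(1-\alpha))^{1-\epsilon}\leq 1-\frac{\alpha\beta\gamma}{10}<1$, which is precisely where both hypotheses enter. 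Your $\mathbf{L}$-charging argument for the $O(cn^3)$ term is fine and matches the paper; the fix is to reverse the peeling order and replace the sketched tree argument by (or actually carry out) the induction above.
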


\begin{proof}

First note that if $\epsilon<\frac{1}{\log n}$, the theorem is trivial, since $n^{3-\epsilon/2}=\Theta(n^3)$. In the rest of the proof, we will assume $\epsilon\geq \frac{1}{\log n}$, and thus $\alpha\beta>\frac{10}{\log n}$. Also, we may assume that $G'$ has sizes exactly $\alpha|A|,\beta|B|,\gamma|C|$ when it is triangle-free, since otherwise we can simply drop extra vertices from the sets. 

The following divide-and-conquer algorithm detects if there is a triangle among vertices $A\cup B\cup C$:

\begin{framed}
\begin{changemargin}{0.6cm}{0cm}
\begin{enumerate}
	\item[Step 0.]
		If $|A||B||C|<n^{2.5}$, do exhaustive search on all triples and return the answer.
	\item[Step 1.]
		Run $\mathbf{L}$ on $A\cup B\cup C$. \\
		Let $G'$ on $A'\cup B'\cup C'$ be the subgraph $\mathbf{L}$ outputs. \\
		Return YES if $G'$ contains a triangle.
	\item[Step 2.]
		Recurse on vertex sets $(A,B,C\setminus C')$, $(A,B\setminus B',C')$, $(A\setminus A',B',C')$. \\
		Return whether any of the three recursive calls returned YES. 
\end{enumerate}
\end{changemargin}
\end{framed}

Its correctness is straightforward. We will prove that the algorithm is efficient. 

For the large-graph case, a similar ``charging argument'' works here as in the proof of Theorem~\ref{TriDectAlgo}. Note that the input $A\cup B\cup C$ we fed to $\mathbf{L}$ in Step 1 always has at least $\sqrt{n}$ vertices in each part. By our assumption, Step 1 will run in $O(c\alpha\beta\gamma |A||B||C|)\leq O(c|A'||B'||C'|)$ time. In Step 2, the algorithm generates the input for recursive calls, which takes only $O(|A|+|B|+|C|)$ time\footnote{It is dominated by the running time of Step 1. Since to determine if $G'$ has a triangle, $\mathbf{L}$ takes at least $\Omega(|A'||B'|)$ time. But we have $|A'||B'|\geq\alpha\beta|A||B|\geq 10n^{1.5}/\log n\geq \Omega(n)$. }. We charge the running time of Step 1 and Step 2 to the triples in $A'\times B'\times C'$. On average, each triple is charged $O(c)$ time. Same as the proof of Theorem~\ref{TriDectAlgo}, every triple in $A'\times B'\times C'$ will not go into the same recursive branch together. Therefore, every triple is charged at most once in this argument. There are $n^3$ triples in total. They are charged at most $O(cn^3)$ time. This proves that Step 1 and Step 2 take at most $O(cn^3)$ time in the entire algorithm.

For the small-graph case, the time we spent on Step 0 is proportional to the number of triples we enumerated. Let $T(S)$ be the maximum possible value of this number, if we start our recursion with $|A||B||C|=S$. On one hand, we have $T(S)\leq S$, as every triple will be manually checked at most once. Moreover, for $S\geq n^{2.5}$, by the way that the algorithm does recursive calls, we have $T(S)\leq T\left(\left(1-\gamma\right)S\right)+T\left(\gamma(1-\beta)S\right)+T\left(\beta\gamma(1-\alpha)S\right)$.

We are going to prove $T(S)\leq n^{2.5}\left(\frac{S}{n^{2.5}}\right)^{1-\epsilon}$ by induction on $S$. For $S\leq n^{2.5}$, the new upper bound automatically holds, since $T(S)\leq S\leq n^{2.5}\left(\frac{S}{n^{2.5}}\right)^{1-\epsilon}$. Otherwise, by the recurrence and induction hypothesis,
\[
	\begin{aligned}
		T(S)&\leq n^{2.5}\left(\left(\frac{\left(1-\gamma\right)S}{n^{2.5}}\right)^{1-\epsilon}+\left(\frac{\gamma(1-\beta)S}{n^{2.5}}\right)^{1-\epsilon}+\left(\frac{\beta\gamma(1-\alpha)S}{n^{2.5}}\right)^{1-\epsilon}\right) \\
		&\leq n^{2.5}\left(\frac{S}{n^{2.5}}\right)^{1-\epsilon}\left(\left({1-\gamma}\right)^{1-\epsilon}+\left({\gamma(1-\beta)}\right)^{1-\epsilon}+\left({\beta\gamma(1-\alpha)}\right)^{1-\epsilon}\right). \\
	\end{aligned}
\]

The value of $\left({1-\gamma}\right)^{1-\epsilon}+\left({\gamma(1-\beta)}\right)^{1-\epsilon}+\left({\beta\gamma(1-\alpha)}\right)^{1-\epsilon}$ is always less than 1: 

\allowdisplaybreaks[1]

\begin{align*}
	&\left({1-\gamma}\right)^{1-\epsilon}+\left({\gamma(1-\beta)}\right)^{1-\epsilon}+\left({\beta\gamma(1-\alpha)}\right)^{1-\epsilon} \\
	&\leq \left(1-(1-\epsilon)\gamma\right)+\gamma^{1-\epsilon}\left(1-(1-\epsilon)\beta\right)+(\beta\gamma)^{1-\epsilon}\left(1-(1-\epsilon)\alpha\right) \tag{by $(1-x)^c\leq 1-cx$ when $0<c,x<1$} \\
	&= 1+\gamma\left(-(1-\epsilon)+\gamma^{-\epsilon}\left(1+\beta\left(-(1-\epsilon)+\beta^{-\epsilon}\left(1-(1-\epsilon)\alpha\right)\right)\right)\right) \\
	&\leq 1+\gamma\left(-(1-\epsilon)+\gamma^{-\epsilon}\left(1+\beta\left(-(1-\epsilon)+(1+2\epsilon\log\frac{1}{\beta})(1-\frac{2\alpha}{3})\right)\right)\right) \tag{by $\epsilon\log\frac{1}{\beta}<1/10$ and $\epsilon<1/10$}\\
	&\leq 1+\gamma\left(-(1-\epsilon)+\gamma^{-\epsilon}\left(1+\beta\left(\epsilon+2\epsilon\log\frac{1}{\beta}-\frac{2\alpha}{3}\right)\right)\right) \\
	&\leq 1+\gamma\left(-(1-\epsilon)+(1+2\epsilon\log\frac{1}{\gamma})(1-\frac{\alpha\beta}{3})\right) \tag{by $\epsilon\log\frac{1}{\gamma}<1/10$ and $\epsilon(1+\log\frac{1}{\beta})<\alpha/10$}\\
	&\leq 1+\gamma\left(\epsilon+2\epsilon\log\frac{1}{\gamma}-\frac{\alpha\beta}{3}\right) \\
	&\leq 1-\frac{\alpha\beta\gamma}{10}<1 \tag{by $\epsilon(1+\log\frac{1}{\gamma})<\alpha\beta/10$}.\\
\end{align*}

This proves $T(S)\leq n^{2.5}\left(\frac{S}{n^{2.5}}\right)^{1-\epsilon}$, in particular, $T(n^3)\leq n^{3-\epsilon/2}$. Therefore, summing up the two cases, the total running time of the algorithm is at most $O(cn^3+n^{3-\epsilon/2})$. 

\qed
\end{proof}

{\noindent\bf Remark.} Note that we do not have to restrict ourselves to find an $\alpha$-fraction of $A$, $\beta$-fraction of $B$ and $\gamma$-fraction of $C$. As long as we can find one part with $\alpha$-fraction, one with $\beta$-fraction and the third with $\gamma$-fraction, and adjust the inputs for recursive calls correspondingly,
we will be able to solve triangle detection efficiently. In this sense, the algorithm in Section~\ref{secTriFind} has parameters $c=\frac{1}{\Delta^4},\alpha=1,\beta=\frac{1}{\Delta},\gamma=\frac{1}{\Delta^2},\epsilon=\Theta(\frac{1}{\Delta\log\Delta})$, while $G'$ is the subgraph induced by $(A,B_1,C_1)$ or the entire graph if all vertices in $A$ have low degree.

\section{Conclusion}

We have shown how to generalize the idea of divide-and-conquer in Chan's algorithm, and have provided a more intuitive proof of the recursion. 
The way of analysing the ``sublinear'' recurrence in Theorem~\ref{TriDectAlgo}, i.e., our $T(S)$ and its analysis, should be able to extend to other problems. We would like to see more applications of this method in proving the efficiency of other combinatorial algorithms that are based on divide-and-conquer. 

Also, we would hope to have an $O(n^2)$ time algorithm for triangle detection on tripartite graphs with vertex set sizes $n, n$, and $\hat{O}(\log^4 n)$. We call this the ``lopsided'' triangle detection problem, where one side of vertices is very small compared to the others. An argument similar to the reduction by Vassilevska Williams and Williams~\cite{VW10} shows that this would yield an $O(n^2)$ time algorithm for multiplying $n\times n$ and $n\times\hat{O}(\log^4 n)$ Boolean matrices, improving the maximum outer dimension $d$ that $n\times n$ and $n\times d$ Boolean matrices can be multiplied in $O(n^2)$ time, in both the combinatorial and the algebraic world. The current record of $d$ is $\hat{O}(\log^3 n)$ by Chan's algorithm (Chan gave an $O(n^2)$ time algorithm for multiplying $n\times d$ and $d\times n$ matrices, which implies an $O(n^2)$ time triangle finding algorithm), while the record in the algebraic world is merely $O(\log n)$~\cite{BD76}. 

Finally, we provide one type of instance for the lopsided triangle detection problem which seems hard to solve in $O(n^2)$ time with our current techniques: a graph $G$ on vertex sets $A,B,C$ with $|A|=|B|=n$ and $|C|=\hat{O}(\log^4 n)$, with roughly $1/\log n$ fraction of edges between $A$ and $C$, $1/\log n$ fraction of edges between $B$ and $C$, and constant fraction of edges between $A$ and $B$. The main difficulty is that the size of $C$ is too small. If we try to do recursion, its size will reach a constant too soon. Once it becomes of constant size, we basically have no way to save anything from exhaustive search. 

\vspace{10pt}

{\noindent \bf Acknowledgement.} The author would like to thank Ryan Williams for helpful discussions on results and writing of the paper, and the anonymous reviewers for their valuable comments. 

\bibliographystyle{plain}
\bibliography{cbmm}

\end{document}